\title{On Algorithmic Equivalence of Instruction Sequences for 
       Computing Bit String Functions}
\author{J.A. Bergstra \and C.A. Middelburg}
\institute{Informatics Institute, Faculty of Science, University of
           Amsterdam, \\
           Science Park~904, 1098~XH Amsterdam, the Netherlands \\
           \email{J.A.Bergstra@uva.nl,C.A.Middelburg@uva.nl}}
\begin{document}
\maketitle

\begin{abstract}
Every partial function from bit strings of a given length to bit strings 
of a possibly different given length can be computed by a finite 
instruction sequence that contains only instructions to set and get the 
content of Boolean registers, forward jump instructions, and a 
termination instruction.
We look for an equivalence relation on instruction sequences of this 
kind that captures to a reasonable degree the intuitive notion that two 
instruction sequences express the same algorithm.
\begin{keywords} 
structural algorithmic equivalence, 
structural computational equivalence, 
single-pass instruction sequence, 
bit string function.
\end{keywords}%
\begin{classcode}
F.1.1, F.2.0.
\end{classcode}
\end{abstract}

\section{Introduction}
\label{sect-intro}

In~\cite{BM13a}, it is among other things shown that a total function on 
bit strings whose result is a bit string of length $1$ belongs to P/poly 
iff it can be computed by polynomial-length instruction sequences that 
contain only instructions to set and get the content of Boolean 
registers, forward jump instructions, and a termination instruction.
In~\cite{BM13d}, where instruction sequences are considered which 
contain backward jump instructions in addition to the above-mentioned 
instructions, it is among other things shown that there exist a total 
function on bit strings and an algorithm for computing the function such 
that the function can be computed according to the algorithm by 
quadratic-length instruction sequences without backward jump 
instructions and by linear-length instruction sequences with backward 
jump instructions.

With that, we implicitly assumed that the instruction sequences without 
backward jump instructions concerned and the instruction sequences with 
backward jump instructions concerned express the same algorithm.
We considered this assumption acceptable because all the different views 
on what characterizes an algorithm lead to the conclusion that we have 
to do here with different expressions of the same algorithm.
However, we cannot prove this due to the absence of a mathematically 
precise definition of an equivalence relation on the instruction 
sequences of the kind considered that captures the intuitive notion that 
two instruction sequences express the same algorithm.

Attempts have been made to define such an equivalence relation in other
settings (see e.g.~\cite{Mil71a,Mos01b,Yan11a}), but there is still 
doubt if there exists an equivalence relation that completely captures 
the intuitive notion that two instruction sequences express the same 
algorithm (see e.g.~\cite{BDG09a,Dea07a}).
If such an equivalence relation would exist, algorithms could be defined 
as equivalence classes of programs with respect to this equivalence 
relation.
We take the viewpoint that there may be different degrees to which 
programs express the same algorithm, these different degrees may give 
rise to different equivalence relations, and these different equivalence 
relations may be interesting despite the fact that they incompletely 
capture the intuitive notion that two instruction sequences express the 
same algorithm.

In this paper, we look for an equivalence relation on instruction 
sequences of the kind considered in~\cite{BM13a} that captures to a 
reasonable degree the intuitive notion that two instruction sequences 
express the same algorithm.
Because the existing viewpoints on what is an algorithm are diverse in 
character and leave loose ends, there is little that we can build on.
Therefore, we restrict ourselves to what is virtually the simplest case.
That is, we take two fixed but arbitrary natural numbers $n,m$ and 
restrict ourselves to instruction sequences for computing partial 
functions from $\set{0,1}^n$ to $\set{0,1}^m$.
If $n$ and $m$ are very large, this simple case covers, at a very low 
level of program and data representation, many non-interactive programs 
that are found in actual practice.

In~\cite{BL02a}, an attempt is made to approach the semantics of 
programming languages from the perspective that a program is in essence 
an instruction sequence.
The groundwork for the approach is an algebraic theory of single-pass
instruction sequences, called program algebra, and an algebraic theory
of mathematical objects that represent the behaviours produced by
instruction sequences under execution, called basic thread algebra.%
\footnote
{In~\cite{BL02a}, basic thread algebra is introduced under the name
 basic polarized process algebra.
}
As in the previous works originating from this work on an approach to 
programming language semantics (see e.g.~\cite{BM12b}), the work 
presented in this paper is carried out in the setting of program algebra 
and basic thread algebra.
In this paper, we only give brief summaries of program algebra, basic 
thread algebra, and  the greater part of the extension of basic thread 
algebra that is used.
A comprehensive introduction, including examples, can among other things
be found in~\cite{BM12b}.

This paper is organized as follows.
First, we give a survey of program algebra and basic thread algebra 
(Section~\ref{sect-PGA-and-BTA}) and a survey of the greater part of the 
extension of basic thread algebra that is used in this paper 
(Section~\ref{sect-TSI}).
Next, we introduce the remaining part of the extension of basic thread 
algebra that is used in this paper (Section~\ref{sect-tracking-use}) and 
present the instruction sequences that concern us in this paper 
(Section~\ref{sect-Boolean-register}).
After that, we give some background on the intuitive notion that two 
instruction sequences express the same algorithm 
(Section~\ref{sect-background-algeqv}) and a picture of our intuition 
about this notion (Section~\ref{sect-intuition-algeqv}).
Then, we define an algorithmic equivalence relation on the instruction 
sequences introduced before that corresponds to this intuition 
(Section~\ref{sect-definition-saeqv}) and a related equivalence relation 
that happens to be too coarse to be an algorithmic equivalence relation 
(Section~\ref{sect-sceqv}). 
Following this, we show how the algorithmic equivalence relation 
introduced before can be lifted to programs in a higher-level program 
notation (Section~\ref{sect-higher-level-algeqv}) and point out that we 
are still far from the definitive answer to the question ``what is an 
algorithm?'' (Section~\ref{sect-discussion}).
Finally, we make some concluding remarks (Section~\ref{sect-concl}).

Henceforth, we will regularly refer to the intuitive notion that two 
instruction sequences express the same algorithm as the notion of 
algorithmic sameness.
Moreover, we will use the term algorithmic equivalence relation as a
general name for equivalence relations that capture to some degree this
intuitive notion.

The groundwork for the work presented in the current paper has been laid 
in previous papers.
The portions of this groundwork that are necessary to understand the 
current paper have been copied near verbatim or slightly modified.
The greater part of Sections~\ref{sect-PGA-and-BTA} 
and~\ref{sect-Boolean-register} originate from~\cite{BM13a} and 
Section~\ref{sect-TSI} originates from~\cite{BM09k}.

\section{Program Algebra and Basic Thread Algebra}
\label{sect-PGA-and-BTA}

In this section, we give a survey of \PGA\ (ProGram Algebra) and \BTA\ 
(Basic Thread Algebra) and make precise in the setting of \BTA\ which 
behaviours are produced by the instruction sequences considered in \PGA\
under execution.
The greater part of this section originates from~\cite{BM13a}.

In \PGA, it is assumed that there is a fixed but arbitrary set $\BInstr$
of \emph{basic instructions}.
The intuition is that the execution of a basic instruction may modify a 
state and produces a reply at its completion.
The possible replies are $\False$ and $\True$.
The actual reply is generally state-dependent.
The set $\BInstr$ is the basis for the set of instructions that may 
occur in the instruction sequences considered in \PGA.
The elements of the latter set are called \emph{primitive instructions}.
There are five kinds of primitive instructions:
\begin{itemize}
\item
for each $a \in \BInstr$, a \emph{plain basic instruction} $a$;
\item
for each $a \in \BInstr$, a \emph{positive test instruction} $\ptst{a}$;
\item
for each $a \in \BInstr$, a \emph{negative test instruction} $\ntst{a}$;
\item
for each $l \in \Nat$, a \emph{forward jump instruction} $\fjmp{l}$;
\item
a \emph{termination instruction} $\halt$.
\end{itemize}
We write $\PInstr$ for the set of all primitive instructions.

On execution of an instruction sequence, these primitive instructions
have the following effects:
\begin{itemize}
\item
the effect of a positive test instruction $\ptst{a}$ is that basic
instruction $a$ is executed and execution proceeds with the next
primitive instruction if $\True$ is produced and otherwise the next
primitive instruction is skipped and execution proceeds with the
primitive instruction following the skipped one --- if there is no
primitive instruction to proceed with,
inaction occurs;
\item
the effect of a negative test instruction $\ntst{a}$ is the same as
the effect of $\ptst{a}$, but with the role of the value produced
reversed;
\item
the effect of a plain basic instruction $a$ is the same as the effect
of $\ptst{a}$, but execution always proceeds as if $\True$ is produced;
\item
the effect of a forward jump instruction $\fjmp{l}$ is that execution
proceeds with the $l$th next primitive instruction --- if $l$ equals $0$ 
or there is no primitive instruction to proceed with, inaction occurs;
\item
the effect of the termination instruction $\halt$ is that execution 
terminates.
\end{itemize}

\PGA\ has one sort: the sort $\InSeq$ of \emph{instruction sequences}. 
We make this sort explicit to anticipate the need for many-sortedness
later on.
To build terms of sort $\InSeq$, \PGA\ has the following constants and 
operators:
\begin{itemize}
\item
for each $u \in \PInstr$, 
the \emph{instruction} constant $\const{u}{\InSeq}$\,;
\item
the binary \emph{concatenation} operator 
$\funct{\ph \conc \ph}{\InSeq \x \InSeq}{\InSeq}$\,;
\item
the unary \emph{repetition} operator 
$\funct{\ph\rep}{\InSeq}{\InSeq}$\,.
\end{itemize}
Terms of sort $\InSeq$ are built as usual in the one-sorted case.
We assume that there are infinitely many variables of sort $\InSeq$, 
including $X,Y,Z$.
We use infix notation for concatenation and postfix notation for
repetition.

A closed \PGA\ term is considered to denote a non-empty, finite or
eventually periodic infinite sequence of primitive instructions.%
\footnote
{An eventually periodic infinite sequence is an infinite sequence with
 only finitely many distinct suffixes.}
The instruction sequence denoted by a closed term of the form
$t \conc t'$ is the instruction sequence denoted by $t$
concatenated with the instruction sequence denoted by $t'$.
The instruction sequence denoted by a closed term of the form $t\rep$
is the instruction sequence denoted by $t$ concatenated infinitely
many times with itself.

Closed \PGA\ terms are considered equal if they represent the same
instruction sequence.
The axioms for instruction sequence equivalence are given in
Table~\ref{axioms-PGA}.%
\begin{table}[!t]
\caption{Axioms of \PGA}
\label{axioms-PGA}
\begin{eqntbl}
\begin{axcol}
(X \conc Y) \conc Z = X \conc (Y \conc Z)              & \axiom{PGA1} \\
(X^n)\rep = X\rep                                      & \axiom{PGA2} \\
X\rep \conc Y = X\rep                                  & \axiom{PGA3} \\
(X \conc Y)\rep = X \conc (Y \conc X)\rep              & \axiom{PGA4}
\end{axcol}
\end{eqntbl}
\end{table}
In this table, $n$ stands for an arbitrary natural number greater than
$0$.
For each $n > 0$, the term $t^n$, where $t$ is a \PGA\ term, is defined 
by induction on $n$ as follows: $t^1 = t$ and $t^{n+1} = t \conc t^n$.

A typical model of \PGA\ is the model in which:
\begin{itemize}
\item
the domain is the set of all finite and eventually periodic infinite
sequences over the set $\PInstr$ of primitive instructions;
\item
the operation associated with ${} \conc {}$ is concatenation;
\item
the operation associated with ${}\rep$ is the operation ${}\srep$
defined as follows:
\begin{itemize}
\item
if $U$ is finite, then $U\srep$ is the unique infinite sequence $U'$ 
such that $U$ concatenated $n$ times with itself is a proper prefix of 
$U'$ for each $n \in \Nat$;
\item
if $U$ is infinite, then $U\srep$ is $U$.
\pagebreak[2]
\end{itemize}
\end{itemize}
We confine ourselves to this model of \PGA, which is an initial model of 
\PGA, for the interpretation of \PGA\ terms.
In the sequel, we use the term \emph{PGA instruction sequence} for the 
elements of the domain of this model, and we denote the interpretations
of the constants and operators in this model by the constants and 
operators themselves.
Below, we will use \BTA\ to make precise which behaviours are produced 
by \PGA\ instruction sequences under execution.

In \BTA, it is assumed that a fixed but arbitrary set $\BAct$ of
\emph{basic actions} has been given.
The objects considered in \BTA\ are called threads.
A thread represents a behaviour which consists of performing basic 
actions in a sequential fashion.
Upon each basic action performed, a reply from an execution environment
determines how the thread proceeds.
The possible replies are the values $\False$ and $\True$.

\BTA\ has one sort: the sort $\Thr$ of \emph{threads}. 
We make this sort explicit to anticipate the need for many-sortedness
later on.
To build terms
of sort $\Thr$, \BTA\ has the following constants and operators:
\begin{itemize}
\item
the \emph{inaction} constant $\const{\DeadEnd}{\Thr}$;
\item
the \emph{termination} constant $\const{\Stop}{\Thr}$;
\item
for each $a \in \BAct$, the binary \emph{postconditional composition} 
operator $\funct{\pcc{\ph}{a}{\ph}}{\Thr \x \Thr}{\Thr}$.
\end{itemize}
Terms of sort $\Thr$ are built as usual in the one-sorted case. 
We assume that there are infinitely many variables of sort $\Thr$, 
including $x,y$.
We use infix notation for postconditional composition. 
We introduce \emph{basic action prefixing} as an abbreviation: 
$a \bapf t$, where $t$ is a \BTA\ term, abbreviates 
$\pcc{t}{a}{t}$.
We identify expressions of the form $a \bapf t$ with the \BTA\
term they stand for.

The thread denoted by a closed term of the form $\pcc{t}{a}{t'}$
will first perform $a$, and then proceed as the thread denoted by
$t$ if the reply from the execution environment is $\True$ and proceed
as the thread denoted by $t'$ if the reply from the execution
environment is $\False$. 
The thread denoted by $\Stop$ will do no more than terminate and the 
thread denoted by $\DeadEnd$ will become inactive.

Closed \BTA\ terms are considered equal if they are syntactically the
same.
Therefore, \BTA\ has no axioms.

Each closed \BTA\ term denotes a finite thread, i.e.\ a thread with a
finite upper bound to the number of basic actions that it can perform.
Infinite threads, i.e.\ threads without a finite upper bound to the
number of basic actions that it can perform, can be defined by means of 
a set of recursion equations (see e.g.~\cite{BM09k}).
We are only interested in models of \BTA\ in which sets of recursion 
equations have unique solutions, such as the projective limit model 
of \BTA\ presented in~\cite{BB03a}.
We confine ourselves to this model of \BTA, which has an initial model 
of \BTA\ as a submodel, for the interpretation of \BTA\ terms. 
In the sequel, we use the term \emph{BTA thread} or simply \emph{thread} 
for the elements of the domain of this model, and we denote the 
interpretations of the constants and operators in this model by the 
constants and operators themselves.

Regular threads, i.e.\ finite or infinite threads that can only be in a 
finite number of states, can be defined by means of a finite set of 
recursion equations.
The behaviours produced by \PGA\ instruction sequences under execution 
are exactly the behaviours represented by regular threads, with the basic 
instructions taken for basic actions.
The behaviours produced by finite \PGA\ instruction sequences are the 
behaviours represented by finite threads.

We combine \PGA\ with \BTA\ and extend the combination with
the \emph{thread extraction} operator $\funct{\extr{\ph}}{\InSeq}{\Thr}$, 
the axioms given in Table~\ref{axioms-thread-extr},%
\begin{table}[!tb]
\caption{Axioms for the thread extraction operator}
\label{axioms-thread-extr}
\begin{eqntbl}
\begin{eqncol}
\extr{a} = a \bapf \DeadEnd \\
\extr{a \conc X} = a \bapf \extr{X} \\
\extr{\ptst{a}} = a \bapf \DeadEnd \\
\extr{\ptst{a} \conc X} =
\pcc{\extr{X}}{a}{\extr{\fjmp{2} \conc X}} \\
\extr{\ntst{a}} = a \bapf \DeadEnd \\
\extr{\ntst{a} \conc X} =
\pcc{\extr{\fjmp{2} \conc X}}{a}{\extr{X}}
\end{eqncol}
\qquad
\begin{eqncol}
\extr{\fjmp{l}} = \DeadEnd \\
\extr{\fjmp{0} \conc X} = \DeadEnd \\
\extr{\fjmp{1} \conc X} = \extr{X} \\
\extr{\fjmp{l+2} \conc u} = \DeadEnd \\
\extr{\fjmp{l+2} \conc u \conc X} = \extr{\fjmp{l+1} \conc X} \\
\extr{\halt} = \Stop \\
\extr{\halt \conc X} = \Stop
\end{eqncol}
\end{eqntbl}
\end{table}
and the rule that $\extr{X} = \DeadEnd$ if $X$ has an infinite chain of 
forward jumps beginning at its first primitive instruction.%
\footnote
{This rule, which can be formalized using an auxiliary structural 
congruence predicate (see e.g.~\cite{BM07g}), is unnecessary when 
considering only finite \PGA\ instruction sequences.
}
In Table~\ref{axioms-thread-extr}, $a$ stands for an arbitrary basic 
instruction from $\BInstr$, $u$ stands for an arbitrary primitive 
instruction from $\PInstr$, and $l$ stands for an arbitrary natural 
number from $\Nat$.
For each closed \PGA\ term $t$, $\extr{t}$ denotes the behaviour  
produced by the instruction sequence denoted by $t$ under execution.

Equality of \PGA\ instruction sequence as axiomatized by the axioms of 
\PGA\ is extensional equality: two \PGA\ instruction sequences are equal 
if they have the same length and the $n$th instructions are equal for 
all $n > 0$ that are less than or equal to the common length.
We define the function $\len$ that assigns to each \PGA\ instruction 
sequence its length:
\begin{ldispl}
\len(u) = 1\;, \\
\len(X \conc Y) = \len(X) + \len(Y)\;, \\
\len(X\rep) = \omega\;, 
\end{ldispl}%
and we define for each $n > 0$ the function $i_n$ that assigns to each 
\PGA\ instruction sequence its $n$th instruction if $n$ is less than or 
equal to its length and $\fjmp{0}$ otherwise:
\begin{ldispl}
i_1(u) = u\;, \\
i_1(u \conc X) = u\;, \\
i_{n+1}(u) = \fjmp{0}\;, \\
i_{n+1}(u \conc X) = i_n(X)\;.
\end{ldispl}%
Let $X$ and $Y$ be \PGA\ instruction sequences.
Then we have by extensionality that 
\begin{ldispl}
\mbox{$X = Y$ iff 
      $\len(X) = \len(Y)$ and $i_n(X) = i_n(Y)$ for all $n > 0$}\;.
\end{ldispl}%
This means that each \PGA\ instruction sequence $X$ is uniquely 
characterized by $\len(X)$ and $i_n(X)$ for all $n > 0$, which will be 
used several times in Section~\ref{sect-definition-saeqv}.

The depth of a finite thread is the maximum number of basic actions that 
the thread can perform before it terminates or becomes inactive.
We define the function $\depth$ that assigns to each finite BTA\ thread 
its depth:
\begin{ldispl}
\depth(\Stop) = 0\;, \\
\depth(\DeadEnd) = 0\;, \\
\depth(\pcc{x}{a}{y}) = \max \set{\depth(x),\depth(y)} + 1\;.
\end{ldispl}%
Let $x$ be a thread of the form $a_1 \bapf \ldots \bapf a_n \bapf \Stop$ 
or the form $a_1 \bapf \ldots \bapf a_n \bapf \DeadEnd$.
Then $\depth(x)$ represents the only number of basic actions that $x$ can 
perform before it terminates or becomes inactive.
This will be used in Section~\ref{sect-definition-saeqv}.

\section{Interaction of Threads with Services}
\label{sect-TSI}

Services are objects that represent the behaviours exhibited by 
components of execution environments of instruction sequences at a high 
level of abstraction.
A service is able to process certain methods.
The processing of a method may involve a change of the service.
At completion of the processing of a method, the service produces a
reply value.
Execution environments are considered to provide a family of 
uniquely-named services.
A thread may interact with the named services from the service family 
provided by an execution environment.
That is, a thread may perform a basic action for the purpose of 
requesting a named service to process a method and to return a reply 
value at completion of the processing of the method.
In this section, we extend \BTA\ with services, service families, a 
composition operator for service families, and operators that are 
concerned with this kind of interaction.
This section originates from~\cite{BM09k}.

In \SFA, the algebraic theory of service families introduced below, it 
is assumed that a fixed but arbitrary set $\Meth$ of \emph{methods} has 
been given.
Moreover, the following is assumed with respect to services:
\begin{itemize}
\item
a signature $\Sig{\Services}$ has been given that includes the following
sorts:
\begin{itemize}
\item
the sort $\Serv$ of
\emph{services};
\item
the sort $\Repl$ of \emph{replies};
\end{itemize}
and the following constants and operators:
\begin{itemize}
\item
the
\emph{empty service} constant $\const{\emptyserv}{\Serv}$;
\item
the \emph{reply} constants $\const{\False,\True,\Div}{\Repl}$;
\item
for each $m \in \Meth$, the
\emph{derived service} operator $\funct{\derive{m}}{\Serv}{\Serv}$;
\item
for each $m \in \Meth$, the
\emph{service reply} operator $\funct{\sreply{m}}{\Serv}{\Repl}$;
\end{itemize}
\item
a minimal $\Sig{\Services}$-algebra $\ServAlg$ has been given in which
$\False$, $\True$, and $\Div$ are mutually different, and
\begin{itemize}
\item
$\LAND{m \in \Meth}{}
  \derive{m}(z) = z \Land \sreply{m}(z) = \Div \Limpl
  z = \emptyserv$
holds;
\item
for each $m \in \Meth$,
$\derive{m}(z) = \emptyserv \Liff \sreply{m}(z) = \Div$ holds.
\end{itemize}
\end{itemize}

The intuition concerning $\derive{m}$ and $\sreply{m}$ is that on a
request to service $s$ to process method $m$:
\begin{itemize}
\item
if $\sreply{m}(s) \neq \Div$, $s$ processes $m$, produces the reply
$\sreply{m}(s)$, and then proceeds as $\derive{m}(s)$;
\item
if $\sreply{m}(s) = \Div$, $s$ is not able to process method $m$ and
proceeds as $\emptyserv$.
\end{itemize}
The empty service $\emptyserv$ itself is unable to process any method.

It is also assumed that a fixed but arbitrary set $\Foci$ of
\emph{foci} has been given.
Foci play the role of names of services in a service family. 

\SFA\ has the sorts, constants and operators from $\Sig{\Services}$ and
in addition the sort $\ServFam$ of \emph{service families} and the 
following constant and operators:
\begin{itemize}
\item
the
\emph{empty service family} constant $\const{\emptysf}{\ServFam}$;
\item
for each $f \in \Foci$, the unary
\emph{singleton service family} operator
$\funct{\mathop{f{.}} \ph}{\Serv}{\ServFam}$;
\item
the binary
\emph{service family composition} operator
$\funct{\ph \sfcomp \ph}{\ServFam \x \ServFam}{\ServFam}$;
\item
for each $F \subseteq \Foci$, the unary
\emph{encapsulation} operator $\funct{\encap{F}}{\ServFam}{\ServFam}$.
\end{itemize}
We assume that there are infinitely many variables of sort $\Serv$,
including $z$, and infinitely many variables of sort $\ServFam$,
including $u,v,w$.
Terms are built as usual in the many-sorted case
(see e.g.~\cite{ST99a,Wir90a}).
We use prefix notation for the singleton service family operators and
infix notation for the service family composition operator.
We write $\Sfcomp{i = 1}{n} t_i$, where $t_1,\ldots,t_n$ are
terms of sort $\ServFam$, for the term
$t_1 \sfcomp \ldots \sfcomp t_n$.

The service family denoted by $\emptysf$ is the empty service family.
The service family denoted by a closed term of the form $f.t$ consists 
of one named service only, the service concerned is the service denoted 
by $t$, and the name of this service is $f$.
The service family denoted by a closed term of the form
$t \sfcomp t'$ consists of all named services that belong to either the
service family denoted by $t$ or the service family denoted by $t'$.
In the case where a named service from the service family denoted by
$t$ and a named service from the service family denoted by $t'$ have
the same name, they collapse to an empty service with the name
concerned.
The service family denoted by a closed term of the form $\encap{F}(t)$ 
consists of all named services with a name not in $F$ that belong to
the service family denoted by $t$.

The axioms of \SFA\ are given in 
Table~\ref{axioms-SFA}.%
\begin{table}[!t]
\caption{Axioms of \SFA}
\label{axioms-SFA}
{
\begin{eqntbl}
\begin{axcol}
u \sfcomp \emptysf = u                                 & \axiom{SFC1} \\
u \sfcomp v = v \sfcomp u                              & \axiom{SFC2} \\
(u \sfcomp v) \sfcomp w = u \sfcomp (v \sfcomp w)      & \axiom{SFC3} \\
f.z \sfcomp f.z' = f.\emptyserv                  & \axiom{SFC4}
\end{axcol}
\qquad
\begin{saxcol}
\encap{F}(\emptysf) = \emptysf                     & & \axiom{SFE1} \\
\encap{F}(f.z) = \emptysf & \mif f \in F       & \axiom{SFE2} \\
\encap{F}(f.z) = f.z    & \mif f \notin F    & \axiom{SFE3} \\
\multicolumn{2}{@{}l@{\quad}}
 {\encap{F}(u \sfcomp v) =
  \encap{F}(u) \sfcomp \encap{F}(v)}               & \axiom{SFE4}
\end{saxcol}
\end{eqntbl}
}
\end{table}
In this table, $f$ stands for an arbitrary focus from $\Foci$ and
$F$ stands for an arbitrary subset of $\Foci$.
These axioms simply formalize the informal explanation given
above.

For the set $\BAct$ of basic actions, we now take the set
$\set{f.m \where f \in \Foci, m \in \Meth}$.
Performing a basic action $f.m$ is taken as making a request to the
service named $f$ to process method $m$.

We combine \BTA\ with \SFA\ and extend the combination with the 
following constants and operators: 
\pagebreak[2]
\begin{itemize}
\item
the binary \emph{abstracting use} operator
$\funct{\ph \sfause \ph}{\Thr \x \ServFam}{\Thr}$;
\item
the binary \emph{apply} operator
$\funct{\ph \sfapply \ph}{\Thr \x \ServFam}{\ServFam}$;
\end{itemize}
and the axioms given in Tables~\ref{axioms-abstracting-use} 
and~\ref{axioms-apply}.%
\begin{table}[!t]
\caption{Axioms for the abstracting use operator}
\label{axioms-abstracting-use}
\begin{eqntbl}
\begin{saxcol}
\Stop  \sfause u = \Stop                              & & \axiom{AU1} \\
\DeadEnd \sfause u = \DeadEnd                         & & \axiom{AU2} \\
(\pcc{x}{f.m}{y}) \sfause \encap{\set{f}}(u) =
\pcc{(x \sfause \encap{\set{f}}(u))}
 {f.m}{(y \sfause \encap{\set{f}}(u))}                & & \axiom{AU3} \\
(\pcc{x}{f.m}{y}) \sfause (f.t \sfcomp \encap{\set{f}}(u)) =
x \sfause (f.\derive{m}t \sfcomp \encap{\set{f}}(u))
                          & \mif \sreply{m}(t) = \True  & \axiom{AU4} \\
(\pcc{x}{f.m}{y}) \sfause (f.t \sfcomp \encap{\set{f}}(u)) =
y \sfause (f.\derive{m}t \sfcomp \encap{\set{f}}(u))
                          & \mif \sreply{m}(t) = \False & \axiom{AU5} \\
(\pcc{x}{f.m}{y}) \sfause (f.t \sfcomp \encap{\set{f}}(u)) = \DeadEnd
                          & \mif \sreply{m}(t) = \Div   & \axiom{AU6}
\end{saxcol}
\end{eqntbl}
\end{table}
\begin{table}[!t]
\caption{Axioms for the apply operator}
\label{axioms-apply}
\begin{eqntbl}
\begin{saxcol}
\Stop  \sfapply u = u                                  & & \axiom{A1} \\
\DeadEnd \sfapply u = \emptysf                         & & \axiom{A2} \\
(\pcc{x}{f.m}{y}) \sfapply \encap{\set{f}}(u) = \emptysf
                                                       & & \axiom{A3} \\
(\pcc{x}{f.m}{y}) \sfapply (f.t \sfcomp \encap{\set{f}}(u)) =
x \sfapply (f.\derive{m}t \sfcomp \encap{\set{f}}(u))
                           & \mif \sreply{m}(t) = \True  & \axiom{A4} \\
(\pcc{x}{f.m}{y}) \sfapply (f.t \sfcomp \encap{\set{f}}(u)) =
y \sfapply (f.\derive{m}t \sfcomp \encap{\set{f}}(u))
                           & \mif \sreply{m}(t) = \False & \axiom{A5} \\
(\pcc{x}{f.m}{y}) \sfapply (f.t \sfcomp \encap{\set{f}}(u)) = \emptysf
                           & \mif \sreply{m}(t) = \Div   & \axiom{A6}
\end{saxcol}
\end{eqntbl}
\end{table}
In these tables, $f$ stands for an arbitrary focus from $\Foci$, $m$ 
stands for an arbitrary method from $\Meth$, and $t$ stands for an 
arbitrary term of sort $\Serv$.
The axioms formalize the informal explanation given below and in 
addition stipulate what is the result of abstracting use and apply if 
inappropriate foci or methods are involved.
We use infix notation for the abstracting use and apply operators.

The thread denoted by a closed term of the form $t \sfause t'$ and the
service family denoted by a closed term of the form $t \sfapply t'$ are
the thread and service family, respectively, that result from processing
the method of each basic action performed by the thread denoted by $t$
by the service in the service family denoted by $t'$ with the focus
of the basic action as its name if such a service exists.
When the method of a basic action performed by a thread is processed by
a service, the service changes in accordance with the method concerned
and the thread reduces to one of the two threads that it can possibly 
proceed with dependent on the reply value produced by the service.

\section{The Tracking Use Operator}
\label{sect-tracking-use}

In this section, we extend the combination of \BTA\ with \SFA\ further
with a tracking use operator.
Abstracting use does not leave a trace of what has taken place during 
the interaction between thread and service family, whereas tracking use 
leaves a detailed trace.
The tracking use operator has been devised for its usefulness for the 
work presented in this paper.

For the set $\BAct$ of basic actions, we now take the set
$\set{f.m \where f \in \Foci, m \in \Meth} \union
 \set{\iact{f.m}{r} \where 
      f \in \Foci,\linebreak[2] m \in \Meth, 
      r \in \set{\False,\True,\Div}}$.
A basic action $\iact{f.m}{r}$ represents the successful handling of a 
request to the service named $f$ to process method $m$ with reply $r$ if 
$r \neq \Div$, and the unsuccessful handling of such a request 
otherwise.

We extend the combination of \BTA\ with \SFA\ further with the 
following operator:
\begin{itemize}
\item
the binary \emph{tracking use} operator
$\funct{\ph \sfpause \ph}{\Thr \x \ServFam}{\Thr}$;
\end{itemize}
and the axioms given in Tables~\ref{axioms-tracking-use}.%
\begin{table}[!t]
\caption{Axioms for the tracking use operator}
\label{axioms-tracking-use}
\begin{eqntbl}
\begin{saxcol}
\Stop  \sfpause u = \Stop                              & & \axiom{PAU1} \\
\DeadEnd \sfpause u = \DeadEnd                         & & \axiom{PAU2} \\
(\pcc{x}{f.m}{y}) \sfpause \encap{\set{f}}(u) =
\pcc{(x \sfpause \encap{\set{f}}(u))}
 {f.m}{(y \sfpause \encap{\set{f}}(u))}                & & \axiom{PAU3} \\
(\pcc{x}{f.m}{y}) \sfpause (f.t \sfcomp \encap{\set{f}}(u)) =
\iact{f.m}{\True} \bapf 
(x \sfpause (f.\derive{m}t \sfcomp \encap{\set{f}}(u)))
                          & \mif \sreply{m}(t) = \True  & \axiom{PAU4} \\
(\pcc{x}{f.m}{y}) \sfpause (f.t \sfcomp \encap{\set{f}}(u)) =
\iact{f.m}{\False} \bapf 
(y \sfpause (f.\derive{m}t \sfcomp \encap{\set{f}}(u)))
                          & \mif \sreply{m}(t) = \False & \axiom{PAU5} \\
(\pcc{x}{f.m}{y}) \sfpause (f.t \sfcomp \encap{\set{f}}(u)) = 
\iact{f.m}{\Div} \bapf \DeadEnd
                          & \mif \sreply{m}(t) = \Div   & \axiom{PAU6} \\
(\pcc{x}{\iact{f.m}{r}}{y}) \sfpause u = 
\iact{f.m}{r} \bapf (x \sfpause u)                    & & \axiom{PAU7} 
\eqnsep

(\pcc{x}{\iact{f.m}{r}}{y}) \sfause u = 
\iact{f.m}{r} \bapf (x \sfause u)                     & & \axiom{AU7} \\
(\pcc{x}{\iact{f.m}{r}}{y}) \sfapply u = x \sfapply u & & \axiom{A7} 
\end{saxcol}
\end{eqntbl}
\end{table}
In this table, $f$ stands for an arbitrary focus from $\Foci$, $m$ 
stands for an arbitrary method from $\Meth$, $r$ stands for an arbitrary
constant of sort $\Repl$, and $t$ stands for an arbitrary term of sort 
$\Serv$.
The axioms PAU1--PAU7 formalize the informal explanation given below.
The axioms AU7 and A7 stipulate what is the result of abstracting use 
and apply if basic actions of the form $\iact{f.m}{r}$ are involved.
We use infix notation for the tracking use operator.

The thread denoted by a closed term of the form $t \sfpause t'$ differs
from the thread denoted by a closed term of the form $t \sfause t'$ as 
follows: when the method of a basic action performed by a thread is 
processed by a service, the thread reduces to one of the two threads 
that it can possibly proceed with dependent on the reply value produced 
by the service prefixed by the basic action $\iact{f.m}{r}$, where $f$ 
is the name of the processing service, $m$ is the method processed, and 
$r$ is the reply value produced.
Thus, the resulting thread represents a trace of what has taken place.
Performing the basic action $\iact{f.m}{r}$ never leads to a change of 
any service and always leads to the reply $\True$. 

\section{Instruction Sequences Acting on Boolean Registers}
\label{sect-Boolean-register}

The basic instructions that concern us in the remainder of this paper 
are instructions to set and get the content of Boolean registers.
We describe in this section services that make up Boolean registers, 
introduce special foci that serve as names of Boolean registers, and 
describe the instruction sequences on which we will define an 
algorithmic equivalence relation.
The greater part of this section originates from~\cite{BM13a}.

First, we describe services that make up Boolean registers.
It is assumed that $\setbr{\False}, \setbr{\True}, \getbr \in \Meth$.
These methods are the ones that Boolean register services are able to 
process.
They can be explained as follows:
\begin{itemize}
\item
$\setbr{\False}$\,:
the contents of the Boolean register becomes $\False$ and the reply is
$\False$;
\item
$\setbr{\True}$\,:
the contents of the Boolean register becomes $\True$ and the reply is
$\True$;
\item
$\getbr$\,:
nothing changes and the reply is the contents of the Boolean register.
\end{itemize}

For $\Sig{\Services}$, we take the signature that consists of the sorts,
constants and operators that are mentioned in the assumptions with
respect to services made in Section~\ref{sect-TSI} and constants 
$\BR_0$ and $\BR_1$.

For $\ServAlg$, we take a minimal $\Sig{\Services}$-algebra that
satisfies the conditions that are mentioned in the assumptions with
respect to services made in Section~\ref{sect-TSI} and the following
conditions for each $b \in \Bool$:
\begin{ldispl}
\begin{geqns}
\derive{\setbr{\False}}(\BR_{b}) = \BR_{\False}\;,
\\[.5ex]
\derive{\setbr{\True}}(\BR_{b})  = \BR_{\True}\;,
\eqnsep 
\sreply{\setbr{\False}}(\BR_{b}) = \False\;,
\\
\sreply{\setbr{\True}}(\BR_{b})  = \True\;,
\end{geqns}
\qquad
\begin{gceqns}
\derive{\getbr}(\BR_{b}) = \BR_{b}\;,
\\[.5ex]
\derive{m}(\BR_{b}) = \emptyserv
 & \mif m \notin \set{\setbr{\False},\setbr{\True},\getbr}\;,
\eqnsep 
\sreply{\getbr}(\BR_{b}) = b\;,
\\
\sreply{m}(\BR_{b}) = \Div
 & \mif m \notin \set{\setbr{\False},\setbr{\True},\getbr}\;.
\end{gceqns}
\end{ldispl}%

In the instruction sequences which concern us in the remainder of this
paper, a number of Boolean registers is used as input registers, a
number of Boolean registers is used as auxiliary registers, and a number
of Boolean register is used as output register.
It is assumed that, for each $i \in \Natpos$, 
$\inbr{i}, \auxbr{i}, \outbr{i} \in \Foci$.
These foci play special roles:
\begin{itemize}
\item
for each $i \in \Natpos$, $\inbr{i}$ serves as the name of the Boolean
register that is used as $i$th input register in instruction sequences;
\item
for each $i \in \Natpos$, $\auxbr{i}$ serves as the name of the Boolean
register that is used as $i$th auxiliary register in instruction
sequences;
\item
for each $i \in \Natpos$, $\outbr{i}$ serves as the name of the Boolean 
register that is used as $i$th output register in instruction sequences.
\end{itemize}
We define the following sets:
\begin{ldispl}
\Fociin{n} = \set{\inbr{i} \where 1 \leq i \leq n}\;, \;
\Fociaux = \set{\auxbr{i} \where i \geq 1}\;, \;
\Fociout{n} = \set{\outbr{i} \where 1 \leq i \leq n}\;, \\
\Focibr{n}{m} = \Fociin{n} \union \Fociaux \union \Fociout{m}\;, \;
\Methbr = \set{\setbr{0},\setbr{1},\getbr}\;, \\
\BIbr{n}{m} =
 \set{f.\getbr    \where f \in \Fociin{n} \union \Fociaux} \union
 \set{f.\setbr{b} \where f \in \Fociaux \union \Fociout{m} \Land
                         b \in \Bool}\;.
\end{ldispl}%
We write
$\PIbr{n}{m}$ for the set of primitive instructions in the case where 
$\BIbr{n}{m}$ is taken for the set $\BInstr$ of basic instructions, and 
we write $\ISbr{n}{m}$ for the set of all finite \PGA\ instruction 
sequences in the case where $\PIbr{n}{m}$ is taken for the set $\PInstr$ 
of primitive instructions.
Moreover, we write $\Tbr{n}{m}$ for the set of all finite \BTA\ threads 
in the case where $\BIbr{n}{m}$ is taken for the set $\BAct$ of basic 
actions. 

Let $n,m \in \Nat$, let $\pfunct{f}{\set{0,1}^n}{\set{0,1}^m}$,%
\footnote
{We write as usual $\pfunct{f}{\set{0,1}^n}{\set{0,1}^m}$ to indicate
 that $f$ is partial function from $\set{0,1}^n$ to $\set{0,1}^m$.}
and let $X \in \ISbr{n}{m}$.
Then $X$ \emph{computes} $f$ if there exists an $l \in \Nat$ such that, 
for all $b''_1,\ldots,b''_l \in \Bool$:
\begin{itemize}
\item
for all $b_1,\ldots,b_n,b'_1,\ldots,b'_m \in \Bool$ with
$f(b_1,\ldots,b_n) = b'_1,\ldots,b'_m$:
\begin{ldispl}
(\extr{X} \sfause
  ((\Sfcomp{i=1}{n} \inbr{i}.\BR_{b_i}) \sfcomp
   (\Sfcomp{i=1}{l} \auxbr{i}.\BR_{b''_i}))) \sfapply
  (\Sfcomp{i=1}{m} \outbr{i}.\BR_{\False}) 
\\ \quad {} = 
\Sfcomp{i=1}{m} \outbr{i}.\BR_{b'_i}\;;
\end{ldispl}%
\item
for all $b_1,\ldots,b_n \in \Bool$ with
$f(b_1,\ldots,b_n)$ undefined:
\begin{ldispl}
(\extr{X} \sfause
  ((\Sfcomp{i=1}{n} \inbr{i}.\BR_{b_i}) \sfcomp
   (\Sfcomp{i=1}{l} \auxbr{i}.\BR_{b''_i}))) \sfapply
  (\Sfcomp{i=1}{m} \outbr{i}.\BR_{\False}) 
\\ \quad {} = 
\emptysf\;.
\end{ldispl}%
\end{itemize}
We know from Theorem~1 in~\cite{BM13a} that, for each $n,m \in \Nat$, 
for each $\funct{f}{\set{0,1}^n}{\set{0,1}^m}$, there exists an 
$X \in \ISbr{n}{m}$ such that $X$ computes $f$.
It is easy to see from the proof of the theorem that this result 
generalizes from total functions to partial functions.

\section{Background on the Notion of Algorithmic Sameness}
\label{sect-background-algeqv}

In section~\ref{sect-definition-saeqv}, we will define an equivalence 
relation that is intended to capture to a reasonable degree the 
intuitive notion that two instruction sequences express the same 
algorithm.
In this section, we give some background on this notion in order to put 
the definition that will be given into context.

In~\cite{BM13d}, where instruction sequences are considered which 
contain backward jump instructions in addition to instructions to set 
and get the content of Boolean registers, forward jump instructions, and 
a termination instruction, it is shown that the function on bit strings 
that models the multiplication of natural numbers on their 
representation in the binary number system can be computed according to 
a minor variant of the long multiplication algorithm by quadratic-length 
instruction sequences without backward jump instructions and by 
linear-length instruction sequences with backward jump instructions.

With that, we implicitly assumed that the instruction sequences without 
backward jump instructions concerned and the instruction sequences with 
backward jump instructions concerned express the same algorithm.
We have asked ourselves the question why this is an acceptable 
assumption and what this says about the notion of an algorithm. 
We considered it an acceptable assumption because all the different 
views on what characterizes an algorithm lead to the conclusion that we 
have to do here with different expressions of the same algorithm.
However, we cannot prove this due to the absence of a mathematically 
precise definition of an equivalence relation on the instruction 
sequences of the kind considered that captures the intuitive notion that 
two instruction sequences express the same algorithm.

The cause of this absence is the general acceptance of the exact 
mathematical concept of a Turing machine and equivalent mathematical 
concepts as adequate replacements of the intuitive concept of an 
algorithm.
Unfortunately, for bit strings of any given length, we can construct at 
least two different Turing machines for the minor variant of the long 
multiplication algorithm referred to above: one without a counterpart of 
a for loop and one with a counterpart of a for loop.
This means that, like programs, Turing machines do not enforce a level 
of abstraction that is sufficient for algorithms.
Moreover, Turing machines are quite remote from anything related to 
actual programming.
Therefore, we doubt whether the mathematical concept of a Turing machine 
is an adequate replacement of the intuitive concept of an algorithm.
This means that we consider a generally accepted mathematically precise 
definition of the concept of an algorithm still desirable.

The existing viewpoints on what is an algorithm are diverse in 
character.
The viewpoint that algorithms are equivalence classes of programs was 
already taken in~\cite{Mil71a}.
This viewpoint was recently also taken in~\cite{Yan11a}, but a rather 
strange twist is that constructions of primitive recursive functions are 
considered to be programs.
In~\cite{Mos01b}, algorithms are viewed as isomorphism classes of tuples 
of recursive functionals that can be defined by repeated application of 
certain schemes.
This viewpoint is somewhat reminiscent of the viewpoint taken 
in~\cite{Yan11a}.
On the other hand, in~\cite{BC82a}, which is concerned with algorithms 
on Kahn-Plotkin's concrete data structures, algorithms are viewed as 
pairs of a function and a computation strategy that resolves choices 
between possible ways of computing the function.
This viewpoint is quite different from the other viewpoints mentioned
above.

In~\cite{KU58a}, it is claimed that the only algorithms are those 
expressed by Kolmogorov machines and that therefore the concept of a 
Kolmogorov machine can be regarded as an adequate formal 
characterization of the concept of an algorithm (see also~\cite{US81a}).
With this the concept of a Kolmogorov machine is actually qualified as a 
replacement of the concept of an algorithm.
In~\cite{Gur00a}, an algorithm is defined as an object that satisfy 
certain postulates.
The postulates concerned seem to be devised with the purpose that 
Gurevich's abstract state machines would satisfy them.
Be that as it may, they are primarily postulates for models of 
computation of a certain kind, i.e.\ replacements of the concept of an 
algorithm.

In~\cite{BDG09a}, it is argued that the intuitive notion that two 
programs express the same algorithm cannot be captured by an equivalence 
relation.
This is also ar\-gued in the philosophical discussion of the view that
algorithms are mathematical objects presented in~\cite{Dea07a}. 
Quite a few of the given arguments are biased towards current patterns 
of thinking within subfields of theoretical computer science like the 
analysis of algorithms and computational complexity theory. 
An important such pattern is the following: if we have proved a result 
concerning programs or abstract machines, then we may formulate it as a 
result concerning algorithms.
This pattern yields, among other things, a biased view on what are the 
properties that two programs expressing the same algorithm must have in 
common.

Moreover, the existence of different opinions and subjective judgments 
concerning the question whether two programs express the same algorithm 
is also a weak argument.
Different opinions and subjective judgments are inevitable in the 
absence of a mathematically precise definition of an equivalence 
relation that captures the intuitive notion that two programs express 
the same algorithm.
All this means that the arguments given in~\cite{BDG09a,Dea07a} are no 
reason for us to doubt the usefulness of looking for an equivalence 
relation that captures to a reasonable degree the intuitive notion that 
two instruction sequences express the same algorithm.  

\section{Intuition about the Notion of Algorithmic Sameness}
\label{sect-intuition-algeqv}

In this section, we give a picture of our intuition about the notion 
that two instruction sequences express the same algorithm.
In section~\ref{sect-definition-saeqv}, we will define in a 
mathematically precise way an equivalence relation corresponding to this 
intuition.

We would like to ground our intuition in the general thinking on the
concept of an algorithm.
However, because the existing viewpoints on what is an algorithm are 
diverse in character and leave loose ends, there is little that we can 
build on.
Therefore, we restrict ourselves to what is virtually the simplest case.
That is, we take two fixed but arbitrary natural numbers $n,m$ and 
restrict ourselves to instruction sequences for computing partial 
functions from $\set{0,1}^n$ to $\set{0,1}^m$.%
\footnote
{We regard total functions as special cases of partial functions.
 Henceforth, total functions are shortly called functions.}
If $n$ and $m$ are very large, this simple case covers, at a very low 
level of program and data representation, many non-interactive programs 
that are found in actual practice.
This simple case has the advantage that data representation is hardly an 
issue in the expressions of algorithms.

In the case that we restrict ourselves to instruction sequences for 
computing partial functions from $\set{0,1}^n$ to $\set{0,1}^m$, taking 
into account the experience gained in~\cite{BM13c,BM13b,BM13d} with 
expressing algorithms by instruction sequences, we consider the 
following to be a first rough approximation of a definition of the 
concept of an algorithm: ``an algorithm is an equivalence class of 
instruction sequences from $\ISbr{n}{m}$ with respect to an equivalence 
relation that completely captures the intuitive notion that two 
instruction sequences express the same algorithm''.
However, the equivalence relation to be defined in 
Section~\ref{sect-definition-saeqv} is likely to incompletely capture 
this notion of algorithmic sameness.
Because we want to capture it to a reasonable degree, we have made a
serious attempt to establish its main characteristics in the simple case 
under consideration.
The main characteristics found are: 
\begin{itemize}
\item
each instruction sequence is algorithmically the same as each 
instruction sequence that produces the same behaviour under execution;
\item
each instruction sequence is algorithmically the same as the instruction 
sequence obtained from it by consistently exchanging $0$ and $1$ as far
as an auxiliary Boolean register is concerned;
\item
each instruction sequence is algorithmically the same as each 
instruction sequence obtained from it by renumbering the auxiliary 
Boolean registers~used;
\item
each instruction sequence is algorithmically the same as each
instruction sequence obtained from it by transposing the basic 
instructions in two always successively executed primitive instructions 
if their combined effect never depends on the order in which they are 
executed.
\end{itemize}
  
The first characteristic expresses that algorithmic sameness is implied
by behav\-ioural equivalence.
It is customary to ascribe this characteristic to the notion of 
algorithmic sameness.
The second characteristic expresses that algorithmic sameness is implied 
by mutual step-by-step simulation of behaviour with $0$ represented by 
$1$ and $1$ represented by $0$ as far as intermediate results are 
concerned.
It is customary to ascribe a characteristic of which this one is a 
special case to the notion of algorithmic sameness.
Usually, the characteristic has a more general form because the data of 
interest are not restricted to bit strings.

The third characteristic can be paraphrased as follows: algorithmic 
sameness identifies two instruction sequences if they only differ in the 
choice of Boolean registers used for storing the different intermediate 
results.
It is customary to ascribe a characteristic of which this one is a 
special case to the notion of algorithmic sameness in those cases where 
the programs concerned are of a concrete form.
The fourth characteristic can be paraphrased as follows: algorithmic 
sameness identifies two instruction sequences if they both express the 
same parallel algorithm.
It should be fairly customary to ascribe this characteristic to the 
notion of algorithmic sameness, but to our knowledge it is seldom made 
explicit.
A remarkable exception is~\cite{MW00a}, a book that describes a 
theoretical framework for optimization of sequential programs by 
parallelization.
 
We remark that it is not easy to think of what may be additional main 
characteristics of the intuitive notion of algorithmic sameness for 
instruction sequences of the kind considered here.
Because we restrict ourselves to algorithms for computing partial 
functions from $\set{0,1}^n$ to $\set{0,1}^m$, for fixed $n,m$, and 
simple instruction sequences without advanced features, it is not 
difficult to gain a comprehensive view of what is within the bounds of 
the possible with regard to the notion of algorithmic sameness.

However, there remain doubtful cases.
An instruction sequence may contain primitive instructions that are
superfluous in the sense that their execution cannot contribute to the 
partial function that it computes.
For example, if a primitive instruction of a form other than 
$\outbr{i}.\setbr{b}$ or $\fjmp{l}$ with $l \neq 1$ is immediately 
followed by a termination instruction, then the former instruction 
cannot contribute to the partial function that the instruction sequence 
computes. 
We are doubtful whether an instruction sequence that contains such a 
superfluous instruction is algorithmically the same as the instruction 
sequence obtained from it by replacing the superfluous instruction by 
$\fjmp{1}$. 
To our knowledge, this does not correspond to any characteristic ever
ascribed to the notion of algorithmic sameness.

As mentioned before, we restrict ourselves in this paper to algorithms 
for computing partial functions from $\set{0,1}^n$ to $\set{0,1}^m$, for 
fixed $n,m$.
We could restrict ourselves further to algorithms for computing partial 
functions from $\set{0,1}^n$ to $\set{0,1}^m$ that can handle their 
restriction to \smash{$\set{0,1}^k$} for each $k < n$ if sufficiently 
many leading zeros are added.
This means that an instruction sequence that computes a partial function 
from $\set{0,1}^n$ to $\set{0,1}^m$ can also be used to compute its 
restriction to \smash{$\set{0,1}^k$} for each $k < n$.
These functions include, for instance, all functions that model the
restriction of an operation on natural numbers to the interval
$[0,2^n - 1]$ on their representation in the binary number system.

\section{The Structural Algorithmic Equivalence Relation}
\label{sect-definition-saeqv}

In this section, we define an algorithmic equivalence relation that
corresponds to the intuition about the notion of algorithmic sameness 
described in Section~\ref{sect-intuition-algeqv}.
Preceding that, we introduce the way in which we will characterize 
instruction sequences several times in this section.

For each of the main characteristics of the intuitive notion of 
algorithmic sameness mentioned in Section~\ref{sect-intuition-algeqv}, 
there is a corresponding equivalence relation which partially captures 
the notion of algorithmic sameness. 
Because some of these equivalence relations may well be interesting as
they are, we first define them and then define an algorithmic 
equivalence relation in terms of them.

Below we define the equivalence relation $\beqv$ on $\ISbr{n}{m}$.
This relation associates each instruction sequence from $\ISbr{n}{m}$ 
with the instruction sequences from $\ISbr{n}{m}$ that produce the same 
behaviour under execution.
The \emph{behavioural equivalence} relation $\beqv$ on $\ISbr{n}{m}$ is 
defined by
\begin{ldispl}
X \beqv Y \Liff \extr{X} = \extr{Y}\;.
\end{ldispl}%

Below we define the equivalence relation $\xeqv$ on $\ISbr{n}{m}$.
This relation associates each instruction sequence from $\ISbr{n}{m}$ 
with the instruction sequences from $\ISbr{n}{m}$ obtained from it by 
consistently exchanging $0$ and $1$ as far as an auxiliary Boolean 
register is concerned.
The \emph{equivalence under bit exchange} relation $\xeqv$ on 
$\ISbr{n}{m}$ is defined as the smallest relation on $\ISbr{n}{m}$ such 
that for all $X,Y \in \ISbr{n}{m}$:
\begin{itemize}
\item
if there exists a finite $I \subset \Natpos$ such that 
$\chi^{}_I(X) = Y$, then $X \xeqv Y$;
\end{itemize}
\sloppy
where, for each finite $I \subset \Natpos$, $\chi^{}_I$ is the unique 
function on $\ISbr{n}{m}$ such that $\len(\chi^{}_I(X)) = \len(X)$ and 
$i_n(\chi^{}_I(X)) = \chi'_I(i_n(X))$ for all $n > 0$, where the 
function $\chi'_I$ on $\PIbr{n}{m}$ is defined as follows: 
\begin{ldispl}
\begin{gceqns}
\chi'_I(f.m)        = f.\chi''(m)\;
 & \mif f \in \set{\auxbr{i} \where i \in I}\;, \\
\chi'_I(f.m)        = f.m\;       
 & \mif f \notin \set{\auxbr{i} \where i \in I}\;, 
\end{gceqns}
\end{ldispl}%
\begin{ldispl}
\begin{gceqns}
\chi'_I(\ptst{f.m}) = \ntst{f.\chi''(m)}\;
 & \mif f \in \set{\auxbr{i} \where i \in I}\;, \\
\chi'_I(\ptst{f.m}) = \ptst{f.m}\;
 & \mif f \notin \set{\auxbr{i} \where i \in I}\;, \\
\chi'_I(\ntst{f.m}) = \ptst{f.\chi''(m)}\;
 & \mif f \in \set{\auxbr{i} \where i \in I}\;, \\
\chi'_I(\ntst{f.m}) = \ntst{f.m}\;
 & \mif f \notin \set{\auxbr{i} \where i \in I}\;, \\
\chi'_I(\fjmp{l}) = \fjmp{l}\;,         \\
\chi'_I(\halt) = \halt\;,               
\end{gceqns}
\end{ldispl}%
where the function $\chi''$ on $\Methbr$ is defined as follows:
\begin{ldispl}
\begin{geqns}
\chi''(\setbr{0}) = \setbr{1}\;, \\
\chi''(\setbr{1}) = \setbr{0}\;, \\ 
\chi''(\getbr) = \getbr\;.
\end{geqns}
\end{ldispl}%

Below we define the equivalence relation $\reqv$ on $\ISbr{n}{m}$.
This relation asso\-ciates each instruction sequence from $\ISbr{n}{m}$ 
with the instruction sequences from $\ISbr{n}{m}$ obtained from it by 
renumbering the auxiliary Boolean registers.
The \emph{equivalence under register renumbering} relation $\reqv$ on 
$\ISbr{n}{m}$ is defined as the smallest relation on $\ISbr{n}{m}$ such 
that for all $X,Y \in \ISbr{n}{m}$:
\begin{itemize}
\item
if there exists a bijection $r$ on $\Natpos$ such that $\rho_r(X) = Y$, 
then $X \reqv Y$;
\end{itemize}
where, for each bijection $r$ on $\Natpos$, the function $\rho_r$ is the 
unique function on $\ISbr{n}{m}$ such that $\len(\rho_r(X)) = \len(X)$ 
and $i_n(\rho_r(X)) = \rho_r'(i_n(X))$ 
for all $n > 0$, where the function $\rho_r'$ on $\PIbr{n}{m}$ is 
defined as follows:
\begin{ldispl}
\begin{geqns}
\rho_r'(f.m) = \rho_r''(f).m\;, \\
\rho_r'(\ptst{f.m}) = \ptst{\rho_r''(f).m}\;, \\
\rho_r'(\ntst{f.m}) = \ntst{\rho_r''(f).m}\;, \\
\rho_r'(\fjmp{l}) = \fjmp{l}\;,         \\
\rho_r'(\halt) = \halt\;,               
\end{geqns}
\end{ldispl}%
where the function $\rho_r''$ on $\Focibr{n}{m}$ is defined as follows:
\begin{ldispl}
\begin{geqns}
\end{geqns}
\qquad
\begin{geqns}
\rho_r''(\inbr{i}) = \inbr{i}\;, \\
\rho_r''(\auxbr{i}) = \auxbr{r(i)}\;, \\ 
\rho_r''(\outbr{i}) = \outbr{i}\;.
\end{geqns}
\end{ldispl}%

Below we define the equivalence relation $\teqv$ on $\ISbr{n}{m}$.
This relation asso\-ciates each instruction sequence from $\ISbr{n}{m}$ 
with the instruction sequences from $\ISbr{n}{m}$ obtained from it by 
transposing the basic instructions in two always suc\-cessively executed 
primitive instructions if their combined effect never depends on the 
order in which they are executed.
The \emph{equivalence under instruction transposition} relation $\teqv$ 
on $\ISbr{n}{m}$ is defined by
\begin{ldispl}
X \teqv Y \Liff \extr{X} \teqvp \extr{Y}\;,
\end{ldispl}%
where $\teqvp$ is the smallest relation on $\Tbr{n}{m}$ such that for 
$x,y,z \in \Tbr{n}{m}$ and $a,b \in \BIbr{n}{m}$:
\begin{itemize}
\item
if $\nm{focus}(a) \neq \nm{focus}(b)$,
then 
$\pcc{(\pcc{x}{b}{y})}{a}{(\pcc{x'}{b}{y'})} \teqvp 
 \pcc{(\pcc{x}{a}{x'})}{b}{(\pcc{y}{a}{y'})}$;%
\footnote
{Here we write $\nm{focus}(a)$ for the unique 
 $f \in \Focibr{n}{m}$ for which there exists an $m \in \Methbr$ such
 that $a = f.m$.}
\item
$x \teqvp x$;
\item
if $x \teqvp y$ and $y \teqvp z$, then $x \teqvp z$;
\item
if $x \teqvp x'$ and $y \teqvp y'$, then 
$\pcc{x}{a}{y} \teqvp \pcc{x'}{a}{y'}$.
\end{itemize}

It is easy to check that the relations $\beqv$, $\xeqv$, $\reqv$, and 
$\teqv$ are actually equivalence relations.

Now we are ready to define the algorithmic equivalence relation 
on $\ISbr{n}{m}$ that corresponds to the intuition about the 
notion of algorithmic sameness described in 
Section~\ref{sect-intuition-algeqv} in terms of the equivalence 
relations $\beqv$, $\xeqv$, $\reqv$, and $\teqv$. 
The \emph{structural algorithmic equivalence} relation $\saeqv$ on 
$\ISbr{n}{m}$ is defined as the smallest relation on $\ISbr{n}{m}$ such 
that for all $X,Y,Z \in \ISbr{n}{m}$:
\begin{itemize}
\item
if $X \beqv Y$ or $X \xeqv Y$ or $X \reqv Y$ or $X \teqv Y$, then 
$X \saeqv Y$;
\item
if $X \saeqv Y$ and $Y \saeqv Z$, then $X \saeqv Z$.
\end{itemize}

It is easy to check that the relation $\saeqv$ is actually an 
equivalence relation.

If $X$ computes a partial function from $\set{0,1}^n$ to $\set{0,1}^m$ 
and $X \saeqv Y$, then $X$ and $Y$ compute the same partial function in
the same number of steps. 
This is made precise in the following theorem.
\begin{theorem}
\label{theorem-saeqv}
For all $X,Y \in \ISbr{n}{m}$, $X \saeqv Y$ only if 
there exists an $l \in \Nat$ such that, 
for all $b'_1,\ldots,b'_l \in \Bool$,
there exist $b''_1,\ldots,b''_l \in \Bool$ such that,
for all $b_1,\ldots,b_n \in \Bool$:
\begin{ldispl}
(\extr{X} \sfause
  ((\Sfcomp{i=1}{n} \inbr{i}.\BR_{b_i}) \sfcomp
   (\Sfcomp{i=1}{l} \auxbr{i}.\BR_{b'_i}))) \sfapply
  (\Sfcomp{i=1}{m} \outbr{i}.\BR_{\False}) 
\\ \;\;\; {} = 
(\extr{Y} \sfause
  ((\Sfcomp{i=1}{n} \inbr{i}.\BR_{b_i}) \sfcomp
   (\Sfcomp{i=1}{l} \auxbr{i}.\BR_{b''_i}))) \sfapply
  (\Sfcomp{i=1}{m} \outbr{i}.\BR_{\False})
\end{ldispl}%
and
\begin{ldispl}
\depth
(\extr{X} \sfpause
  ((\Sfcomp{i=1}{n} \inbr{i}.\BR_{b_i}) \sfcomp
   (\Sfcomp{i=1}{l} \auxbr{i}.\BR_{b'_i})) \sfpause
   (\Sfcomp{i=1}{m} \outbr{i}.\BR_{\False}))
\\ \;\;\; {} = 
\depth
(\extr{Y} \sfpause
  ((\Sfcomp{i=1}{n} \inbr{i}.\BR_{b_i}) \sfcomp
   (\Sfcomp{i=1}{l} \auxbr{i}.\BR_{b''_i})) \sfpause
   (\Sfcomp{i=1}{m} \outbr{i}.\BR_{\False}))\;.
\end{ldispl}%
\end{theorem}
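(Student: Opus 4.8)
The plan is to exploit the inductive definition of $\saeqv$. Since $\saeqv$ is the smallest relation on $\ISbr{n}{m}$ containing $\beqv$, $\xeqv$, $\reqv$ and $\teqv$ that is closed under transitive composition (symmetry and reflexivity come for free, as each of the four generating relations is already symmetric and reflexive), it suffices to prove the displayed property for $X$ related to $Y$ by one of these four relations, and then to show that the property propagates through a transitive step. It is convenient to abbreviate the displayed property as $P(X,Y)$ and to write $P_l(X,Y)$ for the statement obtained from it by fixing the outermost $l$, so that $P(X,Y)$ reads $\exists\, l \in \Nat \colon P_l(X,Y)$; throughout the induction we maintain the invariant that the witnessing $l$ is at least as large as every index of an auxiliary register occurring in $X$ or in $Y$.

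Two routine sublemmas are used everywhere. First, if two Boolean-register service families agree on each service whose focus occurs in some basic action of a thread $t \in \Tbr{n}{m}$, then using $t$ with them, applying $t$ to them, and taking tracking use of $t$ by them give the same thread, service family and (in particular) depth; this follows by an easy induction from the axioms for the abstracting use, apply and tracking use operators, and it entails that the contents of auxiliary registers not touched by $X$, and the presence of further unused auxiliary registers, are immaterial. As an immediate consequence we obtain \emph{monotonicity}: if $P_{l'}(X,Y)$ holds, $l \geq l'$, and $X$ and $Y$ use only auxiliary registers with index $\leq l'$, then $P_l(X,Y)$ holds, the new witness function being any extension of the old one. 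For the base cases: if $X \beqv Y$ then $\extr{X} = \extr{Y}$, so $P_l(X,Y)$ holds for each admissible $l$ with the identity as witness function; if $X \teqv Y$, i.e.\ $\extr{X} \teqvp \extr{Y}$, we again take the identity witness function and reduce to a \emph{key lemma}: for all $x,y \in \Tbr{n}{m}$ with $x \teqvp y$ and every Boolean-register service family $\mathcal F$ with foci in $\Focibr{n}{m}$, applying $x$ to $\mathcal F$ and applying $y$ to $\mathcal F$ give the same service family, and the tracking-use threads have equal depth. Because $\teqvp$ is the closure of the single transposition rule under reflexivity, transitivity and congruence, this is shown by induction on that closure; the only point of substance is the transposition step, where, since $\nm{focus}(a) \neq \nm{focus}(b)$, the two Boolean-register services named by $a$ and $b$ are distinct and act independently, so processing $a$ then $b$ leaves exactly the same service family and produces exactly the same two replies as processing $b$ then $a$. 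Hence $\pcc{(\pcc{x}{b}{y})}{a}{(\pcc{x'}{b}{y'})}$ and $\pcc{(\pcc{x}{a}{x'})}{b}{(\pcc{y}{a}{y'})}$, applied to any $\mathcal F$, reach after exactly two steps the same one of $x,y,x',y'$ together with the same updated service family, and the claim for that continuation follows from the inner induction hypothesis.

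For the remaining two base cases we first prove two relabeling lemmas by induction on produced behaviour: running $\chi_I(X)$ on a register family behaves exactly as running $X$ on the family in which the contents of the registers $\auxbr{i}$ with $i \in I$ are complemented — the crux being that, on a Boolean-register service, $\chi''$ swaps $\setbr{0}$ and $\setbr{1}$ while a negative test exactly compensates the complemented reply to $\getbr$ — and running $\rho_r(X)$ on a register family behaves exactly as running $X$ on the family with auxiliary registers relabeled by $r^{-1}$. Now choose $l$ large enough to dominate the auxiliary-register indices of $X$ and $Y$ (and, in the $\reqv$-case, their images under $r$). In the $\xeqv$-case, where $Y = \chi_I(X)$, take the witness function $b' \mapsto b''$ with $b''_i = \lnot b'_i$ for $i \in I$ and $b''_i = b'_i$ otherwise; in the $\reqv$-case, where $Y = \rho_r(X)$, take $b''_{r(i)} = b'_i$ for the auxiliary registers $i$ actually used by $X$ and arbitrary values on the remaining indices. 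The required equalities of apply-results then follow from the relabeling lemma together with the first sublemma, and the depths agree because complementation and relabeling only change the foci and replies recorded in the trace produced by the tracking use operator, not its length. Finally, the transitive step: if $X \saeqv Z$ and $Z \saeqv Y$ are witnessed by $P_{l_1}(X,Z)$ and $P_{l_2}(Z,Y)$ respecting the invariant, put $l = \max(l_1,l_2)$; by monotonicity $P_l(X,Z)$ and $P_l(Z,Y)$ hold with witness functions $g,h \colon \Bool^l \to \Bool^l$, and then $h \circ g$ witnesses $P_l(X,Y)$, since for each auxiliary-content vector $b'$ and each input vector $b$ the apply-results for $X$ at $b'$, for $Z$ at $g(b')$ and for $Y$ at $h(g(b'))$ coincide, as do the three depths, so the outer equalities chain; and $l$ still respects the invariant. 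Taking $l$ from $P_l(X,Y)$ yields the theorem.

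I expect the key lemma on $\teqvp$ to be the main obstacle: getting the nested induction right (closure of $\teqvp$ on the outside, structure of the transposed continuation on the inside) and verifying cleanly, uniformly in the register family, that transposing two basic actions with distinct foci preserves both the eventual service family — hence the apply-result — and the number of steps — hence the depth. The quantifier shuffling in the transitive step is a distant second in difficulty.
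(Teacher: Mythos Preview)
Your proposal is correct and takes essentially the same route as the paper: reduce to the four generating relations, dispatch $\beqv$ trivially, handle $\xeqv$ and $\reqv$ via the explicit complement-on-$I$ and permute-by-$r$ witness functions together with an induction along the instruction sequence, and handle $\teqv$ by induction on the build-up of $\teqvp$. You are more explicit than the paper about the transitive step --- supplying the monotonicity-in-$l$ lemma and the witness-function composition where the paper simply invokes ``elementary logical reasoning rules'' --- but the underlying strategy is identical.
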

\begin{proof}
By the definition of $\saeqv$ and elementary logical reasoning rules, 
it is sufficient to prove the theorem with $\saeqv$ replaced by $\beqv$, 
$\xeqv$, $\reqv$, and $\teqv$.
The case where $\saeqv$ is replaced by $\beqv$ is trivial.

The case where $\saeqv$ is replaced by $\xeqv$ is proved as outlined 
below.
Let $I \subset \Natpos$ be such that $\chi^{}_I(X) = Y$ (such an $I$ 
exists according to the definition of $\xeqv$).
Then, it is sufficient to prove that $X \xeqv Y$ only if there exists 
an $l \in \Nat$ such that, 
for all $b'_1,\ldots,b'_l \in \Bool$, 
for all $b_1,\ldots,b_n \in \Bool$:
\begin{ldispl}
(\extr{X} \sfause
  ((\Sfcomp{i=1}{n} \inbr{i}.\BR_{b_i}) \sfcomp
   (\Sfcomp{i=1}{l} \auxbr{i}.\BR_{b'_i})))
\\ \;\;\; {} = 
(\extr{Y} \sfause
  ((\Sfcomp{i=1}{n} \inbr{i}.\BR_{b_i}) \sfcomp
   (\Sfcomp{i=1}{l} \auxbr{i}.\BR_{b''_i})))
\end{ldispl}%
and
\begin{ldispl}
\depth
(\extr{X} \sfpause
  ((\Sfcomp{i=1}{n} \inbr{i}.\BR_{b_i}) \sfcomp
   (\Sfcomp{i=1}{l} \auxbr{i}.\BR_{b'_i})))
\\ \;\;\; {} = 
\depth
(\extr{Y} \sfpause
  ((\Sfcomp{i=1}{n} \inbr{i}.\BR_{b_i}) \sfcomp
   (\Sfcomp{i=1}{l} \auxbr{i}.\BR_{b''_i})))\;,
\end{ldispl}%
where, for each $i$ with $1 \leq i \leq l$, 
$b''_i = \overline{b'_i}$ if $i \in I$ and 
$b''_i = b'_i$ if $i \notin I$.%
\footnote
{Here, we write as usual $\overline{b}$ for the complement of $b$.}
This is easily proved by induction on $\len(X)$ and case distinction 
on the possible forms of the first primitive instruction in $X$. 
The case where $\saeqv$ is replaced by $\reqv$ is proved similarly. 

The case where $\saeqv$ is replaced by $\teqv$ is easily proved by
induction on the construction of $\teqvp$. 
\qed
\end{proof}
In Theorem~\ref{theorem-saeqv}, ``only if'' cannot be replaced by ``if 
and only if''.
Take as an example:
\begin{ldispl} 
X = \outbr{1}.\setbr{0} \conc \ldots \conc \outbr{m}.\setbr{0} \conc
     \auxbr{1}.\setbr{0} \conc \halt\;, \\
Y = \outbr{1}.\setbr{0} \conc \ldots \conc \outbr{m}.\setbr{0} \conc
     \auxbr{1}.\setbr{1} \conc \halt\;.
\end{ldispl}%
Then we do not have that $X \saeqv Y$, but $X$ and $Y$ compute the same 
function from $\set{0,1}^n$ to $\set{0,1}^m$ in the same number of 
steps.

Below we define basic algorithms as equivalence classes of instruction 
sequences with respect to the algorithmic equivalence relation $\saeqv$ 
defined above.
Because it is quite possible that $\saeqv$ does not completely capture 
the intuitive notion that two instruction sequences express the same
algorithm, the concept of a basic algorithm introduced below is 
considered to be merely a reasonable approximation of the intuitive 
concept of an algorithm.
The prefix ``basic'' is used because we will introduce in 
Section~\ref{sect-higher-level-algeqv} the concept of a $\PN$-oriented 
algorithm (a concept parameterized by a program notation $\PN$) and that 
concept is basically built on the one introduced here.

Let $\pfunct{f}{\set{0,1}^n}{\set{0,1}^m}$.
Then a \emph{basic algorithm for} $f$ is an 
$A \in \ISbr{n}{m} / {\saeqv}$ such that, for all $X \in A$, $X$ 
computes $f$.%
\footnote
{Here, we write as usual $\ISbr{n}{m} / {\saeqv}$ for the quotient set 
of $\ISbr{n}{m}$ by $\saeqv$.}
A \emph{basic algorithm} is an $A \in \ISbr{n}{m} / {\saeqv}$ for which 
there exists an $\pfunct{f}{\set{0,1}^n}{\set{0,1}^m}$ such that $A$ is
a basic algorithm for $f$.
Let $A$ be a basic algorithm and $X \in \ISbr{n}{m}$.
Then we say that $X$ \emph{expresses} $A$ if $X \in A$.

\section{The Structural Computational Equivalence Relation}
\label{sect-sceqv}

In this section, we define an equivalence relation on $\ISbr{n}{m}$ 
that is coarser than the structural algorithmic equivalence relation 
defined in Section~\ref{sect-definition-saeqv}.
The equivalence relation in question is called the structural 
computational equivalence relation on $\ISbr{n}{m}$.
Although it was devised hoping that it would capture the intuitive 
notion of algorithmic sameness to a higher degree than the structural 
algorithmic equivalence relation, this coarser equivalence relation 
turns out to make equivalent instruction sequences of which it is 
inconceivable that they are considered to express the same algorithm.
Therefore, any equivalence relation that captures the notion of 
algorithmic sameness to a higher degree than the structural algorithmic 
equivalence relation must be finer than the structural computational 
equivalence relation defined in this section.

The structural computational equivalence relation on $\ISbr{n}{m}$ will 
be defined in terms of the equivalence relations $\xeqv$, $\reqv$, and 
$\teqv$ defined in Section~\ref{sect-definition-saeqv} and an 
equivalence relation $\cteqv$ replacing $\beqv$.

Below we define the equivalence relation $\cteqv$ on $\ISbr{n}{m}$.
This relation associates each instruction sequence from $\ISbr{n}{m}$
with the instruction sequences from $\ISbr{n}{m}$ that produce the same
behaviour under execution for all possible contents of the input Boolean
registers.
The \emph{computational trace equivalence} relation $\cteqv$ on 
$\ISbr{n}{m}$ is defined by
\begin{ldispl}
X \cteqv Y
\\ \quad {} \Liff
\Forall{b_1,\ldots,b_n \in \Bool}
 {\extr{X} \sfpause (\Sfcomp{i=1}{n} \inbr{i}.\BR_{b_i})  =
  \extr{Y} \sfpause (\Sfcomp{i=1}{n} \inbr{i}.\BR_{b_i})}\;.
\end{ldispl}%

It is easy to check that the relation $\cteqv$ is actually an
equivalence relation.

Now we are ready to define the equivalence relation $\sceqv$ on 
$\ISbr{n}{m}$ in terms of the equivalence relations $\cteqv$, $\xeqv$, 
$\reqv$, and $\teqv$. 
The \emph{structural computational equivalence} relation $\sceqv$ on
$\ISbr{n}{m}$ is defined as the smallest relation on $\ISbr{n}{m}$ such
that for all $X,Y,Z \in \ISbr{n}{m}$:
\begin{itemize}
\item
if $X \cteqv Y$ or $X \xeqv Y$ or $X \reqv Y$ or $X \teqv Y$, then
$X \sceqv Y$;
\item
if $X \sceqv Y$ and $Y \sceqv Z$, then $X \sceqv Z$.
\end{itemize}

\begin{theorem}
\label{theorem-saeqv-sceqv}
${\saeqv} \subset {\sceqv}$.
\end{theorem}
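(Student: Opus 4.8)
The plan is to prove the inclusion ${\saeqv} \subseteq {\sceqv}$ and its strictness separately. For the inclusion, recall that $X \saeqv Y$ holds precisely when $X$ and $Y$ are linked by a finite chain whose consecutive members are related by one of $\beqv$, $\xeqv$, $\reqv$, $\teqv$, and that $X \sceqv Y$ holds precisely when they are linked by such a chain with $\cteqv$ in the role of $\beqv$. Since $\sceqv$ already contains $\xeqv$, $\reqv$ and $\teqv$ and is transitively closed, it suffices to show ${\beqv} \subseteq {\sceqv}$, and for that it suffices to show ${\beqv} \subseteq {\cteqv}$. The latter is immediate: if $\extr{X} = \extr{Y}$ then, substituting equals for equals, $\extr{X} \sfpause (\Sfcomp{i=1}{n} \inbr{i}.\BR_{b_i}) = \extr{Y} \sfpause (\Sfcomp{i=1}{n} \inbr{i}.\BR_{b_i})$ for every $b_1,\ldots,b_n \in \Bool$, which is the defining condition of $X \cteqv Y$.

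For strictness I would exhibit, assuming $n, m \geq 1$, instruction sequences $X, Y \in \ISbr{n}{m}$ with $X \sceqv Y$ but for which $X \saeqv Y$ does not hold. The key observation is that the registers named by foci in $\Fociin{n}$ are effectively read-only, so that once $\inbr{1}.\getbr$ has been performed a later test on $\inbr{1}$ has a determined outcome, and hence the opposite branch of such a test --- together with everything reachable only through it --- is dead for \emph{every} input while still occurring in the extracted thread. Accordingly I would take $X$ to be the instruction sequence
\[
\ptst{\inbr{1}.\getbr} \conc \ptst{\inbr{1}.\getbr} \conc \fjmp{2} \conc
\outbr{1}.\setbr{0} \conc \outbr{1}.\setbr{1} \conc \halt\;,
\]
in which the third instruction jumps over the fourth, so that the fourth instruction $\outbr{1}.\setbr{0}$ is reachable only through the never-taken false branch of the second test (and not by fall-through from elsewhere), and I would let $Y$ be obtained from $X$ by replacing that fourth instruction by $\fjmp{1}$. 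One then checks, using the axioms of Table~\ref{axioms-tracking-use} and the fact that $\outbr{1}$ does not occur in the service family $\Sfcomp{i=1}{n} \inbr{i}.\BR_{b_i}$ so that PAU3 governs the output action, that for every $b_1,\ldots,b_n \in \Bool$ the resolution of the input reads prunes the dead branch and $\extr{X} \sfpause (\Sfcomp{i=1}{n} \inbr{i}.\BR_{b_i}) = \extr{Y} \sfpause (\Sfcomp{i=1}{n} \inbr{i}.\BR_{b_i})$; hence $X \cteqv Y$ and so $X \sceqv Y$. (In particular $\extr{X} \neq \extr{Y}$, so $X$ and $Y$ are not even behaviourally equivalent.)

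To show that $X \saeqv Y$ does not hold I would use the property $P(Z)$, ``a basic action of the form $\outbr{j}.\setbr{0}$ occurs in $\extr{Z}$'', which holds of $X$ but not of $Y$; it then suffices to show $P$ is preserved by each of $\beqv$, $\xeqv$, $\reqv$ and $\teqv$, for then, $\saeqv$ being the transitive closure of their union, $P$ is $\saeqv$-invariant and $X \saeqv Y$ cannot hold. Preservation under $\beqv$ is trivial. For $\reqv$ and $\xeqv$ one uses that $\rho_r$ and $\chi_I$ rewrite only primitive instructions whose focus is an auxiliary-register focus; a routine induction on thread extraction then shows that $\extr{\rho_r(Z)}$, and likewise $\extr{\chi_I(Z)}$, is obtained from $\extr{Z}$ by relabelling auxiliary-register actions and --- in the $\chi_I$ case, since $\chi_I$ sends $\ptst{\auxbr{i}.\getbr}$ to $\ntst{\auxbr{i}.\getbr}$ for $i \in I$ --- swapping the two continuations at every node carrying such an action, and neither operation can create or destroy an occurrence of an action of the form $\outbr{j}.\setbr{0}$. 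For $\teqv$ one uses that the \emph{set} of basic actions occurring in a thread in $\Tbr{n}{m}$ is invariant under both the transposition clause and the congruence clause defining $\teqvp$, which is a straightforward induction on the derivation. The inclusion half is routine; the real work, and the main obstacle, is the strictness half --- in particular (i) making sure the $\outbr{1}.\setbr{0}$ instruction in $X$ is genuinely unreachable under every input yet still present in $\extr{X}$, which the read-only-input device achieves but which requires also ruling out reachability by fall-through, and (ii) pinning down the exact relationship between $\extr{\chi_I(Z)}$ and $\extr{Z}$ needed for the $\xeqv$ case of the invariant, since the bit-exchange transformation both relabels actions and swaps test branches. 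Note that $P$ is deliberately \emph{not} an invariant of $\cteqv$ (indeed $X \cteqv Y$ while $P(X) \neq P(Y)$), so there is no tension with the inclusion just established; $P$ is precisely what separates the four generators of $\saeqv$ from $\cteqv$.
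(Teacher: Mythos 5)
Your proof is correct, and its overall shape coincides with the paper's: the inclusion is dispatched by observing ${\beqv} \subseteq {\cteqv}$ (the paper does not even spell this out), and strictness is witnessed by an instruction sequence containing code that is dead for \emph{every} input, which $\cteqv$ prunes away but which the generators of $\saeqv$ all preserve. The difference is in the witness and in how much gets verified. The paper takes an $X$ computing the characteristic function of an \emph{unsatisfiable} propositional formula and obtains $X'$ by replacing every $\outbr{i}.\setbr{\True}$ by $\outbr{i}.\setbr{\False}$, so that the never-taken ``satisfied'' branch plays the role of your dead branch; it then simply asserts that $X \saeqv X'$ fails ``immediately''. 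You replace the unsatisfiability gadget by the more elementary device of testing the same read-only input register twice, which yields a concrete six-instruction example, and you supply the argument the paper omits: the invariant ``an action $\outbr{j}.\setbr{0}$ occurs in $\extr{Z}$'' is preserved by each of $\beqv$, $\xeqv$, $\reqv$, $\teqv$ (with the branch-swapping subtlety for $\chi_I$ correctly identified), hence by $\saeqv$. Your version is accordingly more self-contained and more careful --- the paper's argument tacitly requires choosing an $X$ that actually contains occurrences of $\outbr{i}.\setbr{\True}$ reachable only through the unsatisfiable branch, a point it glosses over --- at the cost of a longer write-up. Both arguments need $n, m \geq 1$ (for $n = 0$ one has $\cteqv = \beqv$ and hence $\sceqv = \saeqv$); you state this assumption explicitly, the paper does not. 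No gaps.
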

\begin{proof}
Let $\varphi$ be a proposition containing proposition variables 
$v_1,\ldots,v_n$.
Let $\funct{f}{\set{\False,\True}^n}{\set{\False,\True}^m}$ be such 
that $f(b_1,\ldots,b_n) = \True^m$ if $\varphi$ is satisfied by the 
valuation that assigns $b_1$ to $v_1$, \ldots, $b_n$ to $v_n$ and
$f(b_1,\ldots,b_n) = \False^m$ otherwise.
Let $X \in \ISbr{n}{m}$ be such that $X$ computes $f$.
Let $X' \in \ISbr{n}{m}$ be obtained from $X$ by replacing, for each
$f \in \Fociout{m}$, all occurrences of the basic instruction 
$f.\setbr{\True}$ by the basic instruction $f.\setbr{\False}$.
It follows immediately that not $X \saeqv X'$.
Now suppose that $\varphi$ is not satisfiable.
Then $X \sceqv X'$.
Hence, ${\saeqv} \subset {\sceqv}$.
\qed
\end{proof}
The proof of Theorem~\ref{theorem-saeqv-sceqv} does not only show that
there exist $X$ and $X'$ such that $X \sceqv X'$ and not $X \saeqv X'$.
It also shows that there exist $X$ and $X'$ such that $X \sceqv X'$ 
whereas it is inconceivable that $X$ and $X'$ are considered to express
the same algorithm.  
The point is that an algorithm may take alternatives into account that
will not occur and there is abstracted from such alternatives in the 
case of the structural computational equivalence relation $\sceqv$.

\section{On Algorithmic Equivalence of Higher-Level Programs}
\label{sect-higher-level-algeqv}

In most program notations used for actual programming, programs are more
advanced than the instruction sequences from $\ISbr{n}{m}$.
In this section, we show that the algorithmic equivalence relation on 
$\ISbr{n}{m}$ defined in Section~\ref{sect-definition-saeqv} can easily 
be lifted to programs in a higher-level program notation if the approach 
of projection semantics (see below) is followed in giving the program 
notation concerned semantics.

\PGA\ instruction sequences has not been designed to play a part in 
actual programming. 
In fact, they are less suitable for actual programming than the 
instruction sequences that are found in low-level program notations such 
as assembly languages. 
However, even high-level program notations with advanced features such
as conditional constructs, loop constructs, and subroutines that may 
call themselves recursively can be given semantics by means of a mapping 
from the programs in the notation concerned to \PGA\ instruction 
sequences and a service family.
This approach to the semantics of program notations, which is followed 
in~\cite{BB06a,BL02a}, is called projection semantics. 

We define a program notation as a triple $(L,\varphi,S)$, where $L$ is a 
set of programs, $\varphi$ is a mapping from $L$ to the set of all \PGA\ 
instruction sequences, and $S$ is a service family.
The mapping $\varphi$ is called a projection.
The behaviour of each program $P$ in $L$ is determined by $\varphi$ and 
$S$ as follows: the behaviour of $P$ is the behaviour represented by the 
thread $\extr{\varphi(P)} \sfause S$.

For certain program notations, it is fully sufficient that $S$ is the 
empty service family.
An example is the program notation in which programs are finite 
instruction sequences that differ from finite \PGA\ instruction 
sequences in that they may contain backward jump instructions
(see e.g.~\cite{BL02a}).
For certain other program notations, it is virtually or absolutely 
necessary that $S$ is a non-empty service family.
Examples are program notations with features such as subroutines that 
may call themselves recursively (see e.g.~\cite{BB06a}).
It is clear that $(\ISbr{n}{m},\iota,\emptysf)$, where $\iota$ is the 
identity mapping on $\ISbr{n}{m}$, is the proper program notation for 
$\ISbr{n}{m}$.

Because we build on the algorithmic equivalence relation $\saeqv$ on 
$\ISbr{n}{m}$ defined in Section~\ref{sect-definition-saeqv}, we 
restrict ourselves to program notations $(L,\varphi,S)$ that satisfy the 
following conditions:
\begin{itemize}
\item
$\encap{\Focibr{n}{m}}(S) = S$;
\item
for each $P \in L$, there exists an $X \in \ISbr{n}{m}$ such that 
$\extr{\varphi(P)} \sfause S = \extr{X}$;
\item
for each $X \in \ISbr{n}{m}$, there exists a $P \in L$ such that 
$\extr{\varphi(P)} \sfause S = \extr{X}$.
\end{itemize}
The first condition excludes cases in which $S$ is inappropriate.
The second condition allows for lifting $\saeqv$ from $\ISbr{n}{m}$ to 
$L$ and the third condition allows for doing so such that the coarseness 
of $\saeqv$ is preserved.  
The last two conditions entail a restriction to program notations with 
the same computational power as $(\ISbr{n}{m},\iota,\emptysf)$.
The following explains why $S$ is inappropriate if the first 
condition is not satisfied.
If $S$ contains a service whose focus is from $\Focibr{n}{m}$, which is 
the case if the first condition is not satisfied, the composition of $S$ 
with the family of Boolean register services that is used by an 
instruction sequence from $\ISbr{n}{m}$ in computing a partial function 
from $\set{0,1}^n$ to $\set{0,1}^m$ may lead to an undesirable result by 
axiom SFC4 (Table~\ref{axioms-SFA}).

For each program notation $(L,\varphi,S)$ that satisfies these 
conditions, there exists a program notation $(L,\varphi',\emptyset)$ 
such that, for each $P \in L$,
$\extr{\varphi(P)} \sfause S = \extr{\varphi'(P)} \sfause \emptyset$.
This means that we may restrict ourselves to program notations
$(L,\varphi,S)$ with $S = \emptyset$.
However, this restriction leads in some cases to a complicated 
projection $\varphi$.
Take, for example, a program notation with subroutines that may call 
themselves recursively.
Owing to the restriction to program notations with the same 
computational power as $(\ISbr{n}{m},\iota,\emptysf)$, the recursion
depth is bounded.
Using a service that makes up a bounded stack of natural numbers is 
convenient and explanatory in the description of the behaviour of the 
programs concerned, but it is not necessary.

Below we lift the algorithmic equivalence relation $\saeqv$ defined in 
Section~\ref{sect-definition-saeqv} from $\ISbr{n}{m}$ to programs in a 
higher-level program notation.
The lifted equivalence relation is uniformly defined for all program 
notations.
For each program notation, it captures the notion of algorithmic 
sameness to the same degree as~$\saeqv$.

Let $\PN = (L,\varphi,S)$ be a program notation.
Then we define the \emph{structural algorithmic equivalence} relation 
${\gsaeqv{\PN}} \subseteq L \x L$ \emph{for} $\PN$ as follows:
$P \gsaeqv{\PN} Q$ iff  there exist $X,Y \in \ISbr{n}{m}$ such that 
$\extr{\varphi(P)} \sfause S = \extr{X}$, 
$\extr{\varphi(Q)} \sfause S = \extr{Y}$, and $X \saeqv Y$.

Suppose that the projection $\varphi$ in the program notation 
$\PN = (L,\varphi,S)$ is optimizing in the sense that it removes or 
replaces in whole or in part that which is superfluous.
Because of this, certain programs would be structurally algorithmically 
equivalent that would not be so otherwise.
This may very well be considered undesirable (cf.\ the discussion about 
the replacement of superfluous instructions in 
Section~\ref{sect-intuition-algeqv}).
Fortunately, projections like the supposed one are excluded by the third 
condition that must be satisfied by the program notations to which we 
restrict ourselves.

The concept of a basic algorithm can easily be lifted to programs in a
higher-level program notation as well.
The lifted concept is uniformly defined for all program notations.
For each program notation, it is of course still merely an approximation 
of the intuitive concept of an algorithm.

Let $\PN = (L,\varphi,S)$ be a program notation and
let $\pfunct{f}{\set{0,1}^n}{\set{0,1}^m}$.
Then a \emph{$\PN$-oriented algorithm for} $f$ is an 
$A \in L / {\gsaeqv{\PN}}$ such that 
$\set{\extr{\varphi(P)} \sfause S \where P \in A}$ is a basic algorithm
for $f$.
A \emph{$\PN$-oriented algorithm} is an $A \in L / {\gsaeqv{\PN}}$ for 
which there exists an $\pfunct{f}{\set{0,1}^n}{\set{0,1}^m}$ such that 
$A$ is a $\PN$-oriented algorithm for~$f$.

Let $\PN = (L,\varphi,S)$ be a program notation.
Then the concept of a $\PN$-oriented algorithm is essentially the same 
as the concept of a basic algorithm in the sense that
there exists a surjection $\psi$ from $L$ to $\ISbr{n}{m}$ such that, 
for all $A \in L / {\gsaeqv{\PN}}$, $A$ is a $\PN$-oriented algorithm 
iff $\set{\psi(P) \where P \in A}$ is a basic~algorithm.

In~\cite{BM13d}, instruction sequences without backward jump 
instructions and instruction sequences with backward jump instructions 
were given which were assumed to express the same minor variant of the 
long multiplication algorithm.
If we take the program notation used as $\PN$, then according to the 
definition of $\gsaeqv{\PN}$ given above, the instruction 
sequences without backward jump instructions and the instruction 
sequences with backward jump instructions are structurally 
algorithmically equivalent, which indicates that they express the same 
algorithm.

Two or more different program notations as considered in this section 
can be combined into one.
This is done in the obvious way if the sets of programs to be combined 
are mutually disjoint and the sets of foci that serve as names in the 
service families to be combined are mutually disjoint.
Otherwise, sufficient renaming must be applied first.
The case of a structural algorithmic equivalence relation on the 
programs from two or more different program notations is covered by the 
definition given above as well because of this possibility to combine 
several program notations into one.

The definition of the structural algorithmic equivalence relation for a 
program notation $(L,\varphi,S)$ given above is intended to show that a 
workable such relation can be obtained by lifting the structural 
algorithmic equivalence relation $\saeqv$ from $\ISbr{n}{m}$ to $L$.
It should be mentioned that the definition leaves room for further 
investigation because of the following point.
The service family $S$ is primarily meant for dealing with advanced 
control flow features of the program notation, but $S$ can also be used 
for other purposes.
It may be the case that some of these other purposes make certain 
programs structurally algorithmically equivalent according to the 
definition given above whereas it is debatable whether they are so at 
the level of the program notation. 

The projection $\varphi$ of a program notation $(L,\varphi,S)$ can be 
viewed as a theoretical compiler in which practical issues, such as the 
compactness, space efficiency, and time efficiency of the output, are no 
considerations.
We have not pursued the question whether this makes our approach to 
algorithmic equivalence of higher-level programs relevant to the 
construction of correct compilers.

\section{Discussion on What is an Algorithm}
\label{sect-discussion}

In this section, we point out that we are still far from the definitive 
answer to the question ``what is an algorithm?''.

When we consider an $\pfunct{f}{\set{0,1}^n}{\set{0,1}^m}$, we are  
faced with the question what is an algorithm $A$ for computing $f$. 
We may assume the existence of a class $\ALGO{n}{m}$ of algorithms for 
partial functions  from $\set{0,1}^n$ to $\set{0,1}^m$, in which case 
we may assume that $A \in \ALGO{n}{m}$.
However, as reported in Section~\ref{sect-background-algeqv}, the 
computer science literature gives little to go on with regard to 
$\ALGO{n}{m}$.
What we know about algorithms for computing $f$ is that they can be
expressed by one or more instruction sequences from $\ISbr{n}{m}$.
Moreover, different algorithms for computing $f$ may differ in
comprehensibility and efficiency, i.e.\ the number of steps in which 
they compute $f$ and the number of auxiliary Boolean registers that they 
need to compute $f$. 

Suppose that $\Gamma_A(X)$ is a real number in the interval $[0,1]$ 
which represents the degree to which algorithm $A$ is expressed by 
instruction sequence $X$.
What we can say about $\ALGO{n}{m}$ based on the work presented in this 
paper is that:
\begin{itemize}
\item
$\ALGO{n}{m}$ is approximated by the class of equivalence classes of 
instruction sequences from $\ISbr{n}{m}$ with respect to the structural 
algorithmic equivalence relation $\saeqv$;
\pagebreak[2]
\item
an instruction sequence $X \in \ISbr{n}{m}$ expresses an 
$A \in \ALGO{n}{m}$ if $X \in A$;
\item
$\Gamma_A$ is the characteristic function of $A$.
\end{itemize}
Perhaps we can say more about $\ALGO{n}{m}$ if we allow $\Gamma_A$ to 
yield values other than $0$ and $1$ to deal with uncertainty about 
whether an instruction sequence expresses an algorithm.
This seems to fit in with actual practice where an algorithm $A$ is an 
idea in the mind of a programmer, say $P_1$, and $\Gamma_A$ comprises 
judgments by $P_1$.
If a colleague of $P_1$ tries to get the ``idea of $A$'', then 
questioning $P_1$ about his or her judgments $\Gamma_A$ may be the best 
option available to the colleague.
If the judgments of all members of a group of programmers are the same,
then $A$ is given in the group by the shared $\Gamma_A$.

Returning to the restricted setting considered in this paper, we realize
that important questions are still unanswered.
Among them are:
\begin{itemize}
\item
how can structural algorithmic equivalence be generalized from finite 
\PGA\ instruction sequences to finite and eventually periodic infinite
\PGA\ instruction sequences;
\item
how can algorithms be represented directly, rather than indirectly via 
a representative of an equivalence class;
\item
what is the exact connection between algorithms and efficiency of 
computation?
\end{itemize}

\section{Concluding Remarks}
\label{sect-concl}

We have looked for an equivalence relation on instruction sequences that 
captures to a reasonable degree the intuitive notion that two 
instruction sequences express the same algorithm. 
Restricting ourselves to algorithms for computing partial functions from 
$\set{0,1}^n$ to $\set{0,1}^m$, for fixed $n,m$, we have pictured our 
intuition about the notion that two instruction sequences express the 
same algorithm, defined an algorithmic equivalence relation 
corresponding to this intuition, and defined the concept of a basic 
algorithm using this equivalence relation.
We have also shown how this algorithmic equivalence relation can be 
lifted to programs in a higher-level program notation, i.e.\ a program
notation with advanced features such as conditional constructs, loop 
constructs, and subroutines that may call themselves recursively.

We have further defined an equivalence relation whose relevance is that 
any equivalence relation that captures the notion that two instruction
sequences express the same algorithm to a higher degree than the 
algorithmic equivalence relation defined in this paper must be finer 
than this equivalence relation.
We have also pointed out that we are still far from the definitive 
answer to the question ``what is an algorithm?''.

We leave it for future work to show how the algorithmic equivalence 
relation defined in this paper can be generalized to the case where 
programs compute partial functions on data of a higher level than bit 
strings.  
In case the usual viewpoint is taken that the data may be differently 
represented in algorithmically equivalent programs, defining such a 
generalization in a mathematically precise way is nontrivial.
The issue is that this viewpoint, although intuitively clear, leaves a 
loose end: it remains vague about which inescapable differences between 
programs due to different data representations must be considered 
inessential for algorithmic equivalence.
It is mainly the tying up of this loose end what makes defining the 
generalization of the algorithmic equivalence relation nontrivial.

In~\cite{Mil71a}, Milner takes the viewpoint that algorithms are 
equivalence classes of programs.
Because Milner seems to be the first person taking this viewpoint, we 
coin the name 
\emph{Milner's algorithmic equivalence hypothesis} for the hypothesis
that there exists an equivalence relation on programs that captures the 
intuitive notion that two programs express the same algorithm.
In some more recent papers, notably~\cite{BDG09a,Dea07a}, other people 
take this hypothesis to be implausible.
In the following paragraph, the connection between Milner's algorithmic 
equivalence hypothesis and the work presented in this paper is 
summarized.

Our point of departure is the following weakening of Milner's 
algorithmic equivalence hypothesis: there exists a family of equivalence 
relations on programs that capture the intuitive notion that two 
programs express the same algorithm to some degree.
Restricting ourselves to programs that are instruction sequences for 
computing partial functions from $\set{0,1}^n$ to $\set{0,1}^m$, we have
defined two equivalence relations on the instruction sequences 
concerned: the structural algorithmic equivalence relation and the
structural computational equivalence relation.
We believe that the structural algorithmic equivalence relation is an 
equivalence relation that belongs to the hypothesized family and that 
any equivalence relation belonging to the hypothesized family must be 
finer than the structural computational equivalence relation.

Note that our weakening of Milner's algorithmic equivalence hypothesis 
is reminiscent of process theory, where different behavioural 
equivalence relations capture the intuitive notion that two processes 
exhibit the same behaviour to different degrees 
(see e.g.~\cite{Gla01a}).

\subsection*{Acknowledgements}

We thank two anonymous referees for carefully reading a preliminary 
version of this paper and for suggesting improvements of the 
presentation of the paper.

\bibliographystyle{splncs03}
\bibliography{IS}

\end{document}